\documentclass[runningheads]{llncs}

\usepackage{upgreek}
\usepackage[T1]{fontenc}
\usepackage{amsmath,amssymb,mathtools}
\usepackage{microtype}
\usepackage{algorithm}
\usepackage{algpseudocode}
\usepackage{enumitem}
\usepackage[hidelinks]{hyperref}
\usepackage{todonotes}
\usepackage{hyperref}
\hypersetup{
  colorlinks=true,
  linkcolor=blue,
  urlcolor=blue,
  citecolor=blue
}

\newenvironment{claimproof}{%
  \par\noindent\textit{Proof of the claim.}\ }{%
  \hfill$\triangleleft$\par\medskip}

\usepackage{xcolor}
\hypersetup{
    colorlinks=true,
    linkcolor=blue,
    urlcolor=blue,
    citecolor=blue
}
\usepackage{thmtools,thm-restate}

\newcommand{\vi}{\operatorname{vi}}
\newcommand{\cc}{\operatorname{cc}}

\newcommand{\Oh}{O^*}
\newcommand{\VI}{\textnormal{\textsc{Vertex Integrity}}}

\title{Solving Vertex Integrity Faster than $2^n$}
\titlerunning{ETH Lower Bound and Exact Exponential Algorithm}

\author{
Sandip Das\inst{1} \and
Sweta Das\inst{1} \and
Sk Samim Islam\inst{2} \and
Ritam Manna Mitra\inst{1} \and
Aashirwad Mohapatra\inst{1} \and
Arkaprava Paul\inst{1}
}

\authorrunning{S. Das et al.}

\institute{
Indian Statistical Institute, Kolkata, India\\
\email{
sandip.das.69@gmail.com,
sweta.twin@gmail.com,
rmmitra98@gmail.com,
aashirwad\_r@isical.ac.in,
arkapravapaul@gmail.com
}
\and
University of Electronic Science and Technology of China,
Chengdu, China\\
\email{samimislam08@gmail.com}
}

\begin{document}
\maketitle

\begin{abstract}
Vertex Integrity is a classical and well-studied graph-optimization
problem.
For an undirected graph $G$ and a vertex set $S\subseteq V(G)$, let
$\cc(G \setminus S)$ denote the set of connected components of the induced subgraph $G[V(G)\setminus S]$. The
\emph{vertex integrity} of $G$ is
\[
\vi(G)=\min_{S\subseteq V(G)}
\left(
|S|+\max_{C\in\cc(G \setminus S)}|V(C)|
\right).
\]

The \textsc{vertex integrity} problem asks whether there is a set $S\subseteq V(G)$ such that $ |S|+\max_{C\in\cc(G \setminus S)}|V(C)|\leq k$.
 Clark, Entringer, and Fellows proved in 1987 that Vertex Integrity is NP-complete, even on planar graphs. Their reduction constructs a graph with quadratically many vertices and therefore does not rule out a subexponential-time algorithm in terms of the number of vertices. Several subsequent works studied the complexity of Vertex Integrity and posed related questions on faster and parameterized algorithms. However, whether Vertex Integrity admits a \(2^{o(n)}\)-time algorithm remained unresolved.

We answer this question negatively. We give a polynomial-time reduction with linear vertex blow-up
from \textnormal{\textsc{Vertex Cover}} on subcubic graphs and prove that,
unless the Exponential Time Hypothesis fails, Vertex Integrity has no
$2^{o(n)}$-time algorithm.

On the algorithmic side, the straightforward exact algorithm enumerates all
vertex sets and runs in $O^*(2^n)$ time. We break this $2^n$ barrier and
give a deterministic $O(1.9602^n)$-time exact algorithm.

\keywords{Vertex Integrity \and Exact Exponential Algorithms \and
Exponential Time Hypothesis} 
\end{abstract}

\section{Introduction}

For a finite undirected graph $G$, its \emph{vertex integrity} is
\begin{equation}\label{eq:intro-vi}
\vi(G)=\min_{S\subseteq V(G)}
\left(|S|+\max_{C\in\cc(G \setminus S)}|V(C)|\right),
\end{equation}
where $\cc(G \setminus S)$ is the set of connected components of $G \setminus S$. The \textsc{Vertex Integrity} problem asks, given
$G$ and an integer $k$, whether $\vi(G)\leq k$.

Vertex integrity was introduced as a measure of network vulnerability by
Barefoot, Entringer, and Swart~\cite{BarefootEntringerSwart1987}; see also
the survey of Bagga et al.~\cite{bagga1992survey}. It satisfies
$\operatorname{td}(G)\leq\vi(G)\leq\operatorname{vc}(G)+1$
~\cite{gima2022exploring}, where $\operatorname{td}(G)$ and
$\operatorname{vc}(G)$ denote the treedepth and vertex-cover number of
$G$, respectively.

Clark, Entringer, and Fellows proved that \textsc{Vertex Integrity} is
NP-complete even on planar graphs~\cite{clark1987computational}. It is also
NP-hard on co-bipartite and chordal graphs, planar bipartite graphs of
maximum degree four, and line graphs, whereas polynomial-time algorithms
are known for trees, cactus graphs, split graphs, and some intersection
graph classes~\cite{drange2016computational,gima2025structural}. The
weighted version, in which vertices have positive weights, has also been
studied extensively.

For target value $p$, Fellows and Stueckle gave an $O(p^{3p}n)$-time
algorithm~\cite{fellows1989immersion}. Drange, Dregi, and van~'t~Hof
improved this to $O(p^{p+1}n)$, also for Weighted Vertex Integrity, and
obtained a kernel with at most $p^3$ vertices
~\cite{drange2016computational}. Casel et al. subsequently improved the
unweighted kernel to $3p^2$ vertices~\cite{casel2026combining}. Whether
the problem admits a $c^pn^{O(1)}$-time algorithm for some constant $c$
remains open.

Gima et al.~\cite{gima2025structural} proved that Unweighted Vertex
Integrity is fixed-parameter tractable parameterized by cluster vertex
deletion number and W[2]-hard parameterized by pathwidth. Hanaka et
al.~\cite{hanaka2024parameterized} proved W[1]-hardness parameterized by
treedepth or feedback edge set number and fixed-parameter tractability
parameterized by max-leaf number. They also gave single-exponential
algorithms for Weighted Vertex Integrity parameterized by vertex cover or
modular-width. Gima et al.~\cite{gima2025structural} further obtained an
$O(\log\mathsf{opt})$-approximation using an approximation for component
order connectivity~\cite{lee2017partitioning}; a constant-factor
approximation remains open.

The exact exponential complexity with respect to $n=|V(G)|$ was also
open. Enumerating every $S\subseteq V(G)$ gives an $O^*(2^n)$-time
algorithm, but, to the best of our knowledge, no
$O^*(\alpha^n)$-time algorithm with $\alpha<2$ was known.
There was also a gap on the lower-bound side. The NP-hardness reduction of
Clark, Entringer, and Fellows constructs, from an $n$-vertex source graph,
a graph with $2n^2$ vertices. Therefore, this reduction does not rule out a
$2^{o(n)}$-time algorithm for Vertex Integrity, where $n$ is the number of
vertices of the input graph. Consequently, NP-hardness was known, but it
was still possible that Vertex Integrity could be solved in
subexponential time. Thus, two basic questions concerning the exact
exponential complexity of Vertex Integrity remained open: whether the
problem admits a $2^{o(n)}$-time algorithm, and whether the straightforward
$O^*(2^n)$ running time can be improved.

Throughout the paper, $O^*(\cdot)$ suppresses factors polynomial in the
input size. We use the standard fact that, for fixed constants $1<a<b$ and
$c\geq 0$, we have $n^c a^n=O(b^n)$. Hence, an $O^*(a^n)$-time algorithm
also runs in $O(b^n)$ time for every fixed constant $b>a$.

\subsection{Our Contribution}
First, we establish an ETH lower bound that rules out subexponential-time algorithms with respect to the number of vertices, unless ETH fails.

\begin{restatable}{theorem}{ethlowerbound}\label{thm:eth}
Unless ETH~\footnote{The Exponential Time Hypothesis (ETH) states that there is no
$2^{o(n)}$-time algorithm for 3-SAT on formulas with $n$ variables.} fails,
\VI{} has no $2^{o(n)}n^{O(1)}$-time algorithm on graphs with $n$ vertices.
\end{restatable}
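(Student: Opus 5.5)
We will reduce from \textsc{Vertex Cover} on subcubic graphs, as announced in the abstract. By the standard sparsification-based chain (3-SAT $\to$ 3-SAT with $O(n)$ clauses $\to$ \textsc{Vertex Cover} on graphs of maximum degree $3$), \textsc{Vertex Cover} on an $N$-vertex subcubic graph $H$ has no $2^{o(N)}$-time algorithm under ETH. Note that $H$ has at most $3N/2$ edges. It therefore suffices to build, in polynomial time, a \VI{} instance $(G,k)$ with $|V(G)|=O(N)$ such that $\vi(G)\le k$ if and only if $H$ has a vertex cover of size at most $t$. A $2^{o(n)}$ algorithm for \VI{} would then give a $2^{o(N)}$ algorithm for \textsc{Vertex Cover} on subcubic graphs.

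\textbf{Construction.} We would like to force every component of $G\setminus S$ to have size at most $1$, since then $S$ is exactly a vertex cover of $G$. Clark--Entringer--Fellows achieve this by attaching quadratic-size cliques, which is what we must avoid. Instead we fix a constant-size target component size and use the budget $k$ to balance it.
\begin{itemize}
\item Subdivide each edge of $H$ (or replace it by a constant-size gadget); this adds only $O(N)$ vertices because $H$ has $O(N)$ edges.
\item Attach a single large ``anchor'' structure: a clique $K$ of size $L=\Theta(N)$, joined to every vertex of $H$ through a pendant path or a private neighbour.
\item Set $k=t+L+c$ for a small constant $c$.
\end{itemize}
The intended solution takes $S$ to be a vertex cover $X$ of $H$, plus possibly a constant number of gadget vertices. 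After deleting $S$, the largest component is $K$ together with its constant-size attachments.

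\textbf{Correctness.} The forward direction is a direct size count. The backward direction is harder. If $S$ deletes $r$ vertices of $K$, the largest component still contains at least $L-r$ vertices of $K$, so deleting clique vertices never saves budget. Hence we may assume $S\cap K=\emptyset$. The component containing $K$ then has size at least $L$, which leaves at most $t+c$ vertices for $S$. We must also ensure that the component of $K$ absorbs every vertex of $V(H)\setminus S$ connected to it. If an edge $uv$ of $H$ is uncovered, the attachments should make the $K$-component grow by more than the slack $c$. An exchange argument then replaces gadget vertices of $S$ by endpoints in $H$ without increasing the cost, and this produces a vertex cover of size at most $t$.

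\textbf{Main obstacle.} The hard step is designing gadgets so that uncovered edges are penalised by connectivity to the anchor, rather than by clique sizes that scale with $N$. A naive anchor attachment merges all surviving $H$-vertices into the anchor component, which adds $N-t$ vertices and breaks the count. The first fix we would try is to make the anchor component's size independent of how many $H$-vertices survive: for example, add $N$ dummy vertices that are always present, and connect surviving $H$-vertices to the anchor only through edge gadgets. An edge gadget then fuses with the anchor exactly when both of its endpoints survive. Subcubicity keeps every gadget of constant size, which is what makes the blow-up linear.
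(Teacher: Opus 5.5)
There is a genuine gap. Your outer framework matches the paper: ETH-hardness of \textsc{Vertex Cover} on subcubic graphs, followed by a reduction with linear blow-up. However, you never commit to a construction, so the ``direct size count'' and the exchange argument have nothing concrete to act on. Worse, the mechanism you settle on cannot be realized, because deleting vertices can only split components. Suppose that, for $u,v\notin S$, some vertex of the gadget for $uv$ is joined to the anchor by a path $P$ in $G-S$, and that deleting either endpoint must cut it off. Then $P$ passes through both $u$ and $v$. Now take $S=V(H)\setminus\{u,v\}$. The last vertex of $H$ on $P$ is $u$ or $v$, and the remainder of $P$ uses only vertices outside $H$. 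So that endpoint lies in the anchor component whenever it survives, for every deletion set contained in $V(H)$. Hence the vertices of $H$ joined to the anchor without passing through another vertex of $H$ form a vertex cover $D$ of $H$, fixed by the construction. Each of them adds one unit to the cost whether it is deleted or kept, which is exactly the naive behaviour you identified as breaking the count. The always-present dummy vertices do not help. They shift the anchor by a fixed amount, whereas compensating for a deleted vertex of $H$ would require some component to grow when a vertex is removed.

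The missing idea is that the penalty for an uncovered edge should be linear, not a constant that beats a constant slack. Cliques whose size scales with $n$ are harmless as long as there are only constantly many of them. The paper uses the fact that a subcubic graph $H$ on $n$ vertices is properly $4$-colourable, with $k$ denoting the vertex-cover budget:
\begin{itemize}
\item each colour class $P_i$ is padded with new vertices to a clique $X_i$ of order $n+1$;
\item a fifth clique $X_5$ of order $n+1$ serves as the anchor;
\item the edges of $H$ are kept between the classes, and a universal vertex $z$ is added;
\item the threshold is $n+k+2$.
\end{itemize}
The vertex $z$ is forced into $S$, since otherwise the value is $5n+6$. Padding vertices can be returned from $S$ at no cost, because their surviving neighbours lie in a single component. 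After that, $X_5$ is a component of order $n+1$, which forces $|S|\le k+1$. Surviving vertices of $H$ lie in cliques no larger than $X_5$, so they cost nothing. An uncovered edge between $P_i$ and $P_j$, however, merges $X_i\setminus S$ and $X_j\setminus S$ into a component of order at least $2(n+1)-|S\cap P_i|-|S\cap P_j|$. This pushes the value to at least $2n+3>n+k+2$. The resulting graph has $5(n+1)+1$ vertices, which yields the ETH bound.
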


Theorem~\ref{thm:eth} follows by a reduction from
\textnormal{\textsc{Vertex Cover}} on graphs of maximum degree at most three.
Given an $n$-vertex instance, the reduction constructs an equivalent instance
of \VI{} with exactly $5(n+1)+1$ vertices. Thus, the number of vertices
increases only by a constant factor, which gives the stated ETH lower bound.

Our second result gives an exact exponential-time algorithm and breaks the $2^n$ barrier.

\begin{restatable}{theorem}{exactalgorithm}\label{thm:algorithm}
Vertex integrity of an $n$-vertex graph G can be computed in
\begin{equation}\label{eq}
\Oh\left(\left(\frac{5}{2^{2/5}3^{3/5}}\right)^n\right)
=O(1.9602^n)
\end{equation}
time and $O(1.9602^n)$ space. An optimal integrity separator can be
found within the same bounds.
\end{restatable}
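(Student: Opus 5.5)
The constant in \eqref{eq} satisfies $\frac{5}{2^{2/5}3^{3/5}}=\bigl((2/5)^{2/5}(3/5)^{3/5}\bigr)^{-1}=2^{H(2/5)}$, where $H$ is the binary entropy function. Up to polynomial factors this is $\binom{n}{2n/5}$. So the plan is to show that an optimal pair $(S,\text{components})$ can always be certified by enumerating vertex sets of size at most $2n/5$, or of co-size at most $2n/5$, plus polynomial work per set. The threshold $2/5$ would come from balancing the two regimes. I have not carried out this balancing, and the middle regime below is the step I cannot yet close.

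Fix an optimal separator $S$ with $|S|=s$ and largest remaining component of size $m$, so $\vi(G)=s+m$. We may take $S$ inclusion-minimal among optimal separators: no vertex of $S$ can be returned to $G\setminus S$ without creating a component larger than $m$. For each candidate $k$ we test whether $\vi(G)\le k$, and we distinguish three regimes for $s$.

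\textbf{Small separator, $s\le 2n/5$.} Enumerate all $S$ with $|S|\le 2n/5$. For each, compute the components of $G\setminus S$ in polynomial time and check $|S|+\max|C|\le k$. This costs $\sum_{i\le 2n/5}\binom ni=O^*\bigl(\binom{n}{2n/5}\bigr)$.

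\textbf{Large separator, $s\ge 3n/5$.} Here $W=V\setminus S$ has size at most $2n/5$. Enumerate all such $W$ and check the component sizes of $G[W]$, at the same cost.

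\textbf{Middle range, $2n/5<s<3n/5$.} This is the main obstacle, since sets of these sizes make up almost all of $2^V$. Here $m\le k-s<k-2n/5$, so every component of $G\setminus S$ is small relative to $k$. If $k\le 2n/5$, this regime is empty. The first approach I would try is therefore an enumeration of the set $T=S\cup C_{\max}$, which has size exactly $k$. If $k$ falls outside $(2n/5,3n/5)$, this enumeration again costs $O^*\bigl(\binom{n}{2n/5}\bigr)$. Inside a fixed $T$ one must still split $T$ into $S$ and a connected set $C_{\max}$ with $N(C_{\max})\subseteq S$. For this I would branch on vertices of $T$. Each vertex placed in $C_{\max}$ forces its whole neighbourhood outside $C_{\max}$ into $S$, and minimality of $S$ forces each $S$-vertex to have neighbours in a component that would otherwise overflow.

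For the remaining case, $k$ and $s$ both of order between $2n/5$ and $3n/5$, I would fall back on a measure-and-conquer branching over $V$. The algorithm decides for each vertex whether it goes into $S$. It discards a branch as soon as some component of the undecided-plus-kept part certainly exceeds $k-s_{\text{current}}$, or as soon as $|S|$ exceeds $3n/5$. Each kept vertex whose component is already at size $m$ forces its undecided neighbours into $S$. I expect the recurrence to give $\binom{n}{2n/5}$ exactly at the balance point, but verifying this is the step I am least sure of.

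\textbf{Constructing the separator and the space bound.} The separator is recovered from whichever enumeration or branch succeeds. The $O(1.9602^n)$ space bound would come from storing the polynomial-time reachability data for each enumerated set, or from the memoisation table of the branching algorithm.
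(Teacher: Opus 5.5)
Your small-separator regime matches the paper's first phase. The proposal still has a genuine gap, because the middle range is never handled.

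\textbf{The second parameter.} Put $t=\lceil 2n/5\rceil$. The parameter that pairs with $|S|$ is not $n-|S|$ but the deficit $d=n-\vi(G)=n-|S|-h$. This is the size of the union $U$ of all components of $G\setminus S$ other than a largest one, i.e.\ your $V\setminus T$. The Goddard--Swart characterization gives $\vi(G)=n-\max\{|U| : \mu(V\setminus N[U])\ge\mu(U)\}$. So enumerating all $U$ with $|U|\le t$ needs only a polynomial-time test per set. You do not need to split $T$ into $S$ and $C_{\max}$ by branching, whose cost you never bound. Any connected $W$ of order $\mu(U)$ inside a large enough component of $G-N[U]$ already realizes the value $n-|U|$. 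This one enumeration covers your regime $s\ge 3n/5$ (since $d\le n-s$) and your subcase ``$k\ge 3n/5$''.

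\textbf{The uncovered case.} What remains is $|S|>t$ and $d>t$, hence $h<n-2t$. For this case you offer only an unspecified measure-and-conquer branching, with no measure and no recurrence. Nothing supports the expectation that it runs in $\binom{n}{2n/5}$ time. Also, the constant $2/5$ does not come from balancing your small and large regimes.

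\textbf{The missing structural idea.} Your inclusion-minimality already makes $S$ \emph{irredundant}: every $s\in S$ has neighbours in at least two components of $G\setminus S$. Otherwise returning $s$ creates a component of order at most $h+1$, so the objective does not increase, contradicting minimality. The paper then proceeds as follows.
\begin{enumerate}
\item Pair the components in decreasing order of size and orient each pair at random. This gives an anticomplete partition $V\setminus S=A\cup B$ into unions of whole components with $\bigl||A|-|B|\bigr|\le h$. By irredundancy, some orientation also has $|N(A)\cap N(B)|\ge |S|/2$.
\item Hence $\max\{|A|,|B|\}<n-3t/2$. Since $N[A]\cup N[B]=V$, the side with the larger closed neighbourhood satisfies $|V\setminus N[A]|<n/2-t/4$. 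Both bounds are at most $t$ exactly when $t\ge 2n/5$, which is the true source of the threshold.
\item The algorithm enumerates every $A$ with $|A|\le t$ and $|R_A|\le t$, where $R_A=V\setminus N[A]$.
\item It chooses $B\subseteq R_A$ optimally by the subset dynamic program $F_q(R)=\max_{B\subseteq R}\bigl(|B|-\max\{q,\mu(B)\}\bigr)$ over sets of size at most $t$, with $q=\mu(A)$. These tables are also what the exponential space bound pays for.
\end{enumerate}
Without an argument of this kind for the middle case, your proposal does not establish the stated running time.
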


\paragraph{Overview of the algorithm.}
Let $S$ be an integrity separator and $C$ be a largest component of
$G \setminus S$,let $U=V\setminus(S\cup V(C))$. Since
$\vi(G)=n-|U|$, we enumerate $S$ or $U$ whenever one has 
$\leq t$ vertices.If both are larger, we obtain an
anticomplete partition $A,B$ of $V\setminus S$ such that
$|A|,|V\setminus N[A]|\leq t$. We enumerate $A$ and use a subset dynamic
program to choose the best $B\subseteq V\setminus N[A]$. Thus, only subsets
of size at most $t$ are considered,taking $t =\lceil2n/5 \rceil$ yields a deterministic
$O(1.9602^n)$-time algorithm.

\subsection{Organization}

In Section~\ref{sec:prelim}, we introduce the notation and preliminary results used in the paper. In Section~\ref{sec:lower}, we prove the ETH lower bound stated in Theorem~\ref{thm:eth}. In Section~\ref{sec:algorithm}, we present the structural results and the exact exponential-time algorithm proving Theorem~\ref{thm:algorithm}. Finally, in Section~\ref{sec:conclusion}, we conclude the paper and mention some directions for future work.

\section{Preliminaries}\label{sec:prelim}

All graphs considered are finite, undirected, and simple. For a graph $G=(V,E)$ and
$X\subseteq V$, let $G[X]$ denote the subgraph induced by $X$ and  $G-X=G[V\setminus X]$. The set of connected components of $G$ is denoted by $\cc(G)$. Define $ \displaystyle \mu_G(X)=\max_{C\in\cc(G[X])}|V(C)|$,$V(C)$ referring to the set of vertices within component $C$ with $\mu_G(\varnothing)=0$. When the graph is fixed, we simply write $\mu(X)$. The \emph{vertex integrity} of a graph $G$ is defined as $\vi(G)=\min_{S\subseteq V}\bigl(|S|+\mu(V\setminus S)\bigr).$

A subset $S \subseteq V$ is called an \emph{integrity separator} if $\vi(G) = |S|+\mu(V\setminus S)$.

For $X\subseteq V$, 
let $N(X)=\{v\in V\setminus X : v\text{ has a neighbour in }X\}$ and let $N[X]=X\cup N(X)$. Two disjoint vertex sets $A,B\subseteq V(G)$ are \emph{anticomplete} to each other if no edge has one endpoint in $A$ and the other in $B$.

\medskip
\noindent
\fbox{%
  \begin{minipage}{\dimexpr\linewidth-2\fboxsep-2\fboxrule}
  \textsc{Vertex Cover} problem
  \begin{itemize}[leftmargin=1.5em]
    \item \textbf{Input:} An undirected graph $G = (V, E)$ and an integer $k \in \mathbb{N}$.
    \item \textbf{Question:} Does there exist a \emph{vertex cover} $S \subseteq V$ of size at most $k$; that is, a set of at most $k$ vertices such that every edge in $E$ has at least one endpoint in $S$?
  \end{itemize}
  \end{minipage}%
}
\medskip

\section{Lower Bound Under ETH (Proof of Theorem~\ref{thm:eth})}\label{sec:lower}

In this section, we show that \VI{} has no
$2^{o(n)}n^{O(1)}$-time algorithm on graphs with $n$ vertices. 
We use the Exponential Time Hypothesis (ETH), which states that there is no
$2^{o(n)}$-time algorithm for 3-SAT on formulas with $n$ variables
\cite{impagliazzo2001complexity}. The following result follows from a result of Bacs\'o et al.~\cite{bacso2019subexponential} which says that, assuming ETH,
\textnormal{\textsc{Maximum Independent Set}} has no $2^{o(n)}$-time
algorithm on $n$-vertex graphs of maximum degree at most three.

\begin{lemma}\label{lem:source}
Unless ETH fails, \textsc{Vertex Cover} on $n$-vertex graphs of maximum
degree at most three cannot be solved in $2^{o(n)}$ time.
\end{lemma}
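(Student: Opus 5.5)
The plan is to derive the lemma from the result of Bacs\'o et al.~\cite{bacso2019subexponential} via the classical complementation between independent sets and vertex covers. For any graph $G=(V,E)$ and $X\subseteq V$, the set $X$ is a vertex cover of $G$ if and only if $V\setminus X$ is an independent set. Indeed, an edge with both endpoints in $V\setminus X$ is exactly an edge not covered by $X$. Consequently $G$ has a vertex cover of size at most $k$ if and only if $G$ has an independent set of size at least $n-k$, where $n=|V|$.

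The reduction is then immediate. Suppose, for contradiction, that there is an algorithm $\mathcal{A}$ deciding \textsc{Vertex Cover} on $n$-vertex graphs of maximum degree at most three in $2^{o(n)}$ time. Take an instance $(G,\ell)$ of \textsc{Maximum Independent Set} (decision version: is there an independent set of size at least $\ell$?) with $G$ of maximum degree at most three on $n$ vertices. Output the instance $(G,n-\ell)$ of \textsc{Vertex Cover}, treating $n-\ell<0$ as a trivial no-instance. The graph is unchanged, so the maximum degree stays at most three and the vertex count stays $n$. Running $\mathcal{A}$ therefore decides the independent set instance in $2^{o(n)}+n^{O(1)}=2^{o(n)}$ time, contradicting ETH by the cited result.

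If the cited lower bound is phrased for the optimization version, I would note that one can compute $\alpha(G)$ with at most $n+1$ calls to the decision procedure, or by binary search. This costs only a polynomial factor, which is absorbed into $2^{o(n)}$.

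There is no real obstacle here. The only points to check are that the reduction preserves both the degree bound and the number of vertices exactly, so that the $2^{o(n)}$ bound transfers with no change in the parameter. Both are trivial since the graph itself is not modified.
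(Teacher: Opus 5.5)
Your proposal is correct and matches the paper's proof: both take the Bacs\'o et al.\ lower bound for \textsc{Maximum Independent Set} on graphs of maximum degree at most three and transfer it through the correspondence ``$I$ is independent iff $V\setminus I$ is a vertex cover''. Both also note that the graph itself is unchanged, so the vertex count and degree bound are preserved. Your extra remarks on the case $n-\ell<0$ and on decision versus optimization versions are harmless refinements of the same argument.
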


\begin{proof}
Bacs\'o et al.~\cite{bacso2019subexponential} proved that, unless ETH
fails, \textsc{Maximum Independent Set} on $n$-vertex graphs of maximum
degree at most three cannot be solved in $2^{o(n)}$ time. A set
$I\subseteq V(G)$ is an independent set if and only if
$V(G)\setminus I$ is a vertex cover. Hence $G$ has an independent set
of size at least $\ell$ if and only if it has a vertex cover of size at
most $n-\ell$. The transformation does not change the graph, and
therefore preserves the number of vertices.
\hfill $\square$
\end{proof}

 We reduce the \textsc{Vertex Cover} problem on graphs with maximum degree three to the \textsc{Vertex Integrity} problem.

Let $G_1$ be an $n$-vertex graph of maximum degree at most three. Since every vertex of $G_1$ has degree at most three, a proper $4$-coloring of $G_1$ can be computed in polynomial time. Let $P_1,P_2,P_3,P_4$ be its color classes; thus, each $P_i$ is an independent set. We construct a graph $G_2$ as follows.

For each $i\in\{1,2,3,4\}$, add $n+1-|P_i|$ new vertices to $P_i$ to obtain a set $X_i$ of order $n+1$. Introduce another set $X_5$ consisting of $n+1$ new vertices. Make $G_2[X_i]$ a clique for every $i\in\{1,2,3,4,5\}$. For every pair of distinct indices $i,j\in\{1,2,3,4\}$ and every $u\in P_i$ and $v\in P_j$, add the edge $uv$ to $G_2$ if and only if $uv\in E(G_1)$. No other edges are added between distinct sets among $X_1,\ldots,X_5$.

Finally, introduce a vertex $z$ and make it adjacent to every vertex of $\bigcup_{i=1}^{5}X_i$. The resulting graph $G_2$ is connected and has
\begin{equation*}
|V(G_2)|=5(n+1)+1.
\end{equation*}
Every vertex of $X_i\setminus P_i$, for $i\in\{1,2,3,4\}$, has exactly one neighbour outside $X_i$, namely $z$. Similarly, every vertex of $X_5$ has exactly one neighbour outside $X_5$, namely $z$.
\begin{figure}[h]
    \centering
    \includegraphics[width=\textwidth]{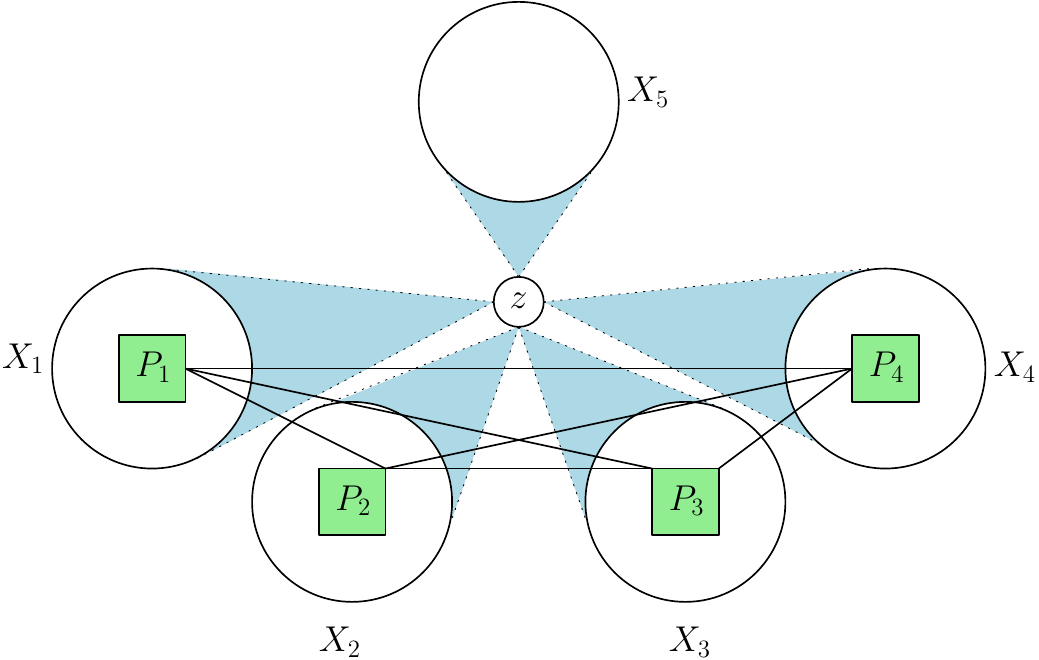}
    \caption{An illustration of the construction of the graph $G_2$ used in the proof of Lemma~\ref{lem:equivalence}. The set $V(G_1) = \bigcup_{i \in [4]} P_i$. For all $i \in \{1,2,3,4\}$ the set $P_i$ is the set of vertices where every vertex in $P_i$ is colored with the same color and $P_i \subseteq X_i$. For each $j \in \{1,2,3,4,5\}$, the subgraph induced on each $X_j$ is a clique on $n+1$ vertices and the vertex $z$ is adjacent to every vertex in $X_j$. Finally $\displaystyle V(G_2) = \bigl ( \bigcup_{j \in [5]} X_j \bigr ) \cup \{z\}$.}
    \label{fig:construction}
\end{figure}

\begin{lemma}\label{lem:equivalence}
The graph $G_1$ has a vertex cover of size at most $k$ if and only if
\begin{equation*}
\vi(G_2)\leq n+k+2.
\end{equation*}
\end{lemma}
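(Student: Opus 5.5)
The plan is a direct two-directional argument that uses only the clique structure of the sets $X_i$ and the universal vertex $z$. First I would dispose of the degenerate range $k\geq n$: then $V(G_1)$ is a vertex cover of size $n\leq k$, and the forward direction below (with the cover $V(G_1)$) gives $\vi(G_2)\leq n+n+2\leq n+k+2$. So both sides hold, and from now on I assume $k\leq n-1$, that is, $n+k+2\leq 2n+1$.

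\emph{Forward direction.} Given a vertex cover $K$ of $G_1$ with $|K|\leq k$, take $S=K\cup\{z\}$. The only edges of $G_2$ between distinct sets $X_i,X_j$ not incident to $z$ are edges of $G_1$ between $P_i$ and $P_j$. All of them are hit by $K$, so after deleting $S$ there are no edges between different $X_i\setminus S$. Each nonempty $X_i\setminus S$ is a clique, so the components of $G_2-S$ are exactly the nonempty sets $X_i\setminus S$, each of size at most $n+1$. Hence $|S|+\mu(V\setminus S)\leq k+1+n+1=n+k+2$.

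\emph{Backward direction.} Let $S$ satisfy $|S|+\mu(V\setminus S)\leq n+k+2$, and write $s_i=|S\cap X_i|$.
\begin{itemize}
\item \emph{$z\in S$.} If $z\notin S$, then $z$ is adjacent to all remaining vertices, so $G_2-S$ is connected and $|S|+\mu=|V(G_2)|=5n+6>n+k+2$, a contradiction.
\item \emph{Lower bound on $\mu$ from $X_5$.} The clique $X_5\setminus S$ is a whole component, so $\mu\geq n+1-s_5$.
\item \emph{Case 1: $K:=S\cap V(G_1)$ is a vertex cover.} Since $|S|\geq 1+|K|+s_5$, we get
\[
n+k+2\geq 1+|K|+s_5+n+1-s_5=|K|+n+2,
\]
so $|K|\leq k$ and we are done.
\item \emph{Case 2: some edge $uv\in E(G_1)$ with $u\in P_i$, $v\in P_j$ ($i\neq j$) survives in $G_2-S$.} Then the cliques $X_i\setminus S$ and $X_j\setminus S$ lie in one component, so $\mu\geq 2(n+1)-s_i-s_j$. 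Since $|S|\geq 1+s_i+s_j$, this gives $|S|+\mu\geq 2n+3>n+k+2$, a contradiction.
\end{itemize}
So only Case 1 can occur, which completes the backward direction.

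The main obstacle I expect is the backward direction. The separator $S$ may spend vertices inside the padding sets $X_i\setminus P_i$ or $X_5$ rather than on $V(G_1)$, and one has to see why that never helps. The gadget $X_5$ is exactly what controls this: it forces $\mu\geq n+1-s_5$, so any deletions inside $X_5$ are paid back one-for-one in $\mu$. Meanwhile, any uncovered edge merges two cliques of size about $n+1$, which is too expensive once $k<n$. Checking that these two accounting inequalities close with the exact constants $n+k+2$ is the step I would verify most carefully.
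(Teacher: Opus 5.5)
Your proof is correct and follows essentially the same route as the paper: force $z\in S$, use the clique $X_5$ to get $|S\cap V(G_1)|\leq k$, and show that an uncovered edge merges two cliques of total size about $2(n+1)$, pushing the objective to at least $2n+3$. The one difference is that you skip the paper's exchange step (moving padding vertices out of $S$ first) by letting $s_5$ cancel and counting $s_i=|S\cap X_i|$ directly, which gives slightly cleaner accounting.
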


\begin{proof}
We may assume that $k\leq n$, since otherwise the instance is trivially a
YES-instance.

\textsf{(Only-if part)}
Suppose $G_1$ has a vertex cover $C\subseteq V(G_1)$ with $|C|\leq k$.
Let $S=C\cup\{z\}$. For any distinct $i,j\in\{1,2,3,4\}$, every edge
between $X_i$ and $X_j$ corresponds to an edge of $G_1$. Since $C$ is a
vertex cover of $G_1$, no such edge survives in $G_2-S$. Moreover, the
universal vertex $z$ is deleted.

Therefore, the components of $G_2-S$ are the nonempty sets
$X_i\setminus C$, for $i\in\{1,2,3,4\}$, together with the clique $X_5$.
Each $X_i\setminus C$ has at most $n+1$ vertices, while $X_5$ has exactly
$n+1$ vertices. Hence
\begin{equation*}
\mu_{G_2}(V(G_2)\setminus S)=n+1.
\end{equation*}
Since $|S|=|C|+1\leq k+1$, we obtain
\begin{equation*}
|S|+\mu_{G_2}(V(G_2)\setminus S)
\leq k+1+n+1
=n+k+2.
\end{equation*}
Thus $\vi(G_2)\leq n+k+2$.

\medskip
\textsf{(If part)}
Conversely, suppose $\vi(G_2)\leq n+k+2$. Then there exists a set
$S\subseteq V(G_2)$ such that
\begin{equation*}
|S|+\mu_{G_2}(V(G_2)\setminus S)\leq n+k+2.
\end{equation*}

\medskip
\noindent
\textit{Claim 1.} $z\in S$.

\begin{claimproof}
Suppose, for a contradiction, that $z\notin S$. Since $z$ is adjacent to
every other vertex of $G_2$, the graph $G_2-S$ is connected. Therefore,
\begin{equation*}
\mu_{G_2}(V(G_2)\setminus S)
=|V(G_2)|-|S|
=5(n+1)+1-|S|.
\end{equation*}
It follows that
\begin{equation*}
|S|+\mu_{G_2}(V(G_2)\setminus S)
=5(n+1)+1
=5n+6.
\end{equation*}
Since $k\leq n$, we have $n+k+2\leq 2n+2<5n+6$, a contradiction.
\end{claimproof}

We next show that we may assume
\begin{equation*}
S\setminus\{z\}\subseteq V(G_1).
\end{equation*}
Suppose that
$w\in S\setminus(V(G_1)\cup\{z\})$.
Then $w$ is one of the newly introduced vertices. If
$w\in X_i\setminus P_i$ for some $i\in\{1,2,3,4\}$, its only neighbour
outside $X_i$ is $z$. If $w\in X_5$, then all its neighbours belong to
$X_5\cup\{z\}$. Since $z\in S$, in either case all neighbours of $w$
outside $S$ lie in at most one component of $G_2-S$.

Hence, if $w$ is removed from $S$, restoring $w$ can increase the order of
the largest component by at most one. Thus
\begin{equation*}
\mu_{G_2}\bigl(V(G_2)\setminus(S\setminus\{w\})\bigr)
\leq
\mu_{G_2}(V(G_2)\setminus S)+1.
\end{equation*}
At the same time, the size of the deletion set decreases by one.
Therefore, replacing $S$ by $S\setminus\{w\}$ does not increase the value
$|S|+\mu_{G_2}(V(G_2)\setminus S)$. Repeating this operation for every
such vertex $w$, we may assume that
$S\setminus\{z\}\subseteq V(G_1)$.

Since no vertex of $X_5$ is deleted and $z\in S$, the clique $X_5$ is a
component of $G_2-S$ of order $n+1$. Therefore,
\begin{equation*}
\mu_{G_2}(V(G_2)\setminus S)\geq n+1.
\end{equation*}
Together with
\begin{equation*}
|S|+\mu_{G_2}(V(G_2)\setminus S)\leq n+k+2,
\end{equation*}
this gives $|S|\leq k+1$. Since $z\in S$, we obtain
\begin{equation*}
|S\setminus\{z\}|\leq k.
\end{equation*}

It remains to show that $S\setminus\{z\}$ is a vertex cover of $G_1$.
Suppose, for a contradiction, that some edge $uv\in E(G_1)$ has neither
endpoint in $S$. Since the colouring of $G_1$ is proper, there are distinct
indices $i,j\in\{1,2,3,4\}$ such that $u\in P_i$ and $v\in P_j$.
Let
\begin{equation*}
s_i=|S\cap P_i|
\qquad\text{and}\qquad
s_j=|S\cap P_j|.
\end{equation*}
Since no newly introduced vertex other than $z$ belongs to $S$, the sets
$X_i\setminus S$ and $X_j\setminus S$ have $n+1-s_i$ and $n+1-s_j$
vertices, respectively. Both sets induce cliques, and the edge $uv$
survives in $G_2-S$. Hence these two cliques lie in the same component of
$G_2-S$, and therefore
\begin{equation*}
\mu_{G_2}(V(G_2)\setminus S)
\geq (n+1-s_i)+(n+1-s_j).
\end{equation*}

Put $S'=S\setminus\{z\}$. Since $S'\subseteq V(G_1)$ and $z\notin
P_i\cup P_j$, we have
\begin{align*}
|S|+\mu_{G_2}(V(G_2)\setminus S)
&\geq |S|+2(n+1)-s_i-s_j\\
&=2n+3+|S'\setminus(P_i\cup P_j)|\\
&\geq 2n+3.
\end{align*}
However, $k\leq n$ implies $n+k+2\leq 2n+2$, contradicting
\begin{equation*}
|S|+\mu_{G_2}(V(G_2)\setminus S)\leq n+k+2.
\end{equation*}
Thus every edge of $G_1$ has at least one endpoint in
$S\setminus\{z\}$. Hence $S\setminus\{z\}$ is a vertex cover of $G_1$,
and its size is at most $k$. \hfill $\square$
\end{proof}

\begin{proof}[Proof of Theorem~\ref{thm:eth}]
Suppose, for a contradiction, that \textsc{Vertex Integrity} can be solved
in $2^{o(N)}N^{O(1)}$ time on graphs with $N$ vertices.

Let $(G_1,k)$ be an instance of \textsc{Vertex Cover}, where $G_1$ has
$n$ vertices and maximum degree at most three. Apply the construction
described above to obtain the graph $G_2$. By construction,
\begin{equation*}
N=|V(G_2)|=5(n+1)+1=O(n).
\end{equation*}
By Lemma~\ref{lem:equivalence}, $G_1$ has a vertex cover of size at most
$k$ if and only if
\begin{equation*}
\vi(G_2)\leq n+k+2.
\end{equation*}

Therefore, the assumed algorithm for \textsc{Vertex Integrity} would solve
\textsc{Vertex Cover} on $n$-vertex graphs of maximum degree at most three
in $2^{o(N)}N^{O(1)}=2^{o(n)}n^{O(1)}=2^{o(n)}$ 
time. This contradicts Lemma~\ref{lem:source}. Hence, unless ETH
fails, \textsc{Vertex Integrity} cannot be solved in
$2^{o(n)}n^{O(1)}$ time on $n$-vertex graphs.

\hfill $\square$
\end{proof}

\section{Exact Exponential Algorithm (Proof of Theorem~\ref{thm:algorithm})}\label{sec:algorithm}

In this section we present an exact exponential algorithm to compute the vertex integrity of a given graph.

We use the following reformulation of a characterization of Goddard and
Swart~\cite[Section~5]{goddard1990integrity}; a proof is included for
completeness.

\begin{lemma}\label{lem:deficit}
For a graph $G = (V,E)$ of order $n$,
  $$\vi(G)=n-\max\left\{|U|:U\subseteq V,
  \ \mu(V\setminus N[U])\ge\mu(U)\right\}.$$
\end{lemma}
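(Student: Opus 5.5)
We prove the two inequalities separately, writing $M$ for the maximum on the right-hand side. Note that $U=\varnothing$ is always feasible, since $\mu(V)\ge 0=\mu(\varnothing)$, so $M$ is well defined.

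\emph{Showing $\vi(G)\ge n-M$.} Let $S$ be an integrity separator. If $V\setminus S=\varnothing$, then $\vi(G)=n$, and $U=\varnothing$ already gives $n-M\le n$. Otherwise, let $C$ be a largest component of $G-S$ and put $U=V\setminus(S\cup V(C))$. Since $C$ is a component of $G-S$, no vertex of $U$ is adjacent to $V(C)$. Hence $N[U]\subseteq U\cup S$, so $V(C)\subseteq V\setminus N[U]$. Since $C$ is connected, $\mu(V\setminus N[U])\ge |V(C)|$. Every component of $G[U]$ is a component of $G-S$, so $\mu(U)\le |V(C)|$. Thus $U$ is feasible, and
\[
\vi(G)=|S|+|V(C)|=n-|U|\ge n-M.
\]

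\emph{Showing $\vi(G)\le n-M$.} Take a feasible $U$ attaining $M$, and let $D$ be a largest component of $G[V\setminus N[U]]$. If $V\setminus N[U]=\varnothing$, then feasibility forces $\mu(U)=0$, so $U=\varnothing$, $M=0$, and the bound $\vi(G)\le n$ is trivial. Otherwise set
\[
S=V\setminus\bigl(U\cup V(D)\bigr).
\]
This $S$ contains $N(U)$, so in $G-S$ no vertex of $U$ is adjacent to $V(D)$. The components of $G-S$ are therefore the components of $G[U]$ together with $D$. By feasibility, $\mu(U)\le \mu(V\setminus N[U])=|V(D)|$, so $\mu(V\setminus S)=|V(D)|$. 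Consequently
\[
\vi(G)\le |S|+|V(D)|=n-|U|=n-M.
\]

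The key check in both directions is that removing $N(U)$, which is contained in $S$, genuinely separates $U$ from the chosen component. This holds because $N(U)\subseteq S$ by construction in the second direction, and because $C$ is a component of $G-S$ in the first. The only other point to watch is the degenerate empty cases, which are handled above.
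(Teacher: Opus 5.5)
Your proof is correct and follows the paper's argument. The first direction is identical: you take $U$ to be the union of all components of $G-S$ other than a largest one. In the second direction you keep the whole largest component $D$ of $G-N[U]$, whereas the paper trims it to a connected set $W$ of exactly $\mu(U)$ vertices via a spanning tree. Since the cost is $n-|U|-|V(D)|+|V(D)|=n-|U|$ either way, your version shows that the trimming step is unnecessary.
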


\begin{proof}

Let
$M=\max\left\{|U|:U\subseteq V,\ 
\mu(V\setminus N[U])\geq \mu(U)\right\}$.
We show that $\vi(G)=n-M$.
The statement is immediate for $n=0$, so assume $n\geq1$.
Choose an integrity separator $S$ with $S\neq V$. Such a separator always
exists. Indeed, if $V$ is an integrity separator, then $\vi(G)=n$. Since
the empty set gives a value at most $n$, it also attains the optimum, and
hence $\varnothing$ is an integrity separator.

Let $C$ be a largest component of $G \setminus S$, and let
$U=V\setminus(S\cup V(C))$.
 Thus, $U$ is the union of all components of
$G \setminus S$ other than $C$. In particular,
$\mu(U)\leq |V(C)|$. Since different components of $G \setminus S$ have no edges
between them, no vertex of $C$ belongs to $N[U]$. Hence
$V(C)\subseteq V\setminus N[U]$, and therefore
\begin{equation*}
\mu(V\setminus N[U])\geq |V(C)|\geq \mu(U).
\end{equation*}
Thus $U$ is feasible in the definition of $M$. Moreover,
\begin{equation*}
n-|U|=|S|+|V(C)|=\vi(G).
\end{equation*}
It follows that $M\geq n-\vi(G)$, and hence
$\vi(G)\geq n-M$.

For the reverse inequality, let $U\subseteq V$ satisfy
$\mu(V\setminus N[U])\geq\mu(U)$, and set $q=\mu(U)$. If $q=0$, then
$U=\varnothing$, and $\vi(G)\leq n=n-|U|$.

Assume now that $q\geq1$. Since
$\mu(V\setminus N[U])\geq q$, some component of $G-N[U]$ contains at
least $q$ vertices. Choose a set $W$ of exactly $q$ vertices from this
component such that $G[W]$ is connected. Such a set exists by taking a
spanning tree of the component and repeatedly deleting leaves.

Since $W\subseteq V\setminus N[U]$, the sets $U$ and $W$ are disjoint and
there is no edge between them. Let
$S=V\setminus(U\cup W)$. Then $G \setminus S=G[U\cup W]$, and its components are
$G[W]$ together with the components of $G[U]$. The component $G[W]$ has
order $q$, while every component of $G[U]$ has order at most $q$.
Therefore $\mu(V\setminus S)=q$. Since $|W|=q$, we obtain
\begin{align*}
|S|+\mu(V\setminus S)
  &= n-|U|-|W|+q\\
  &= n-|U|.
\end{align*}
Hence $\vi(G)\leq n-|U|$ for every feasible set $U$. Taking $U$ of
maximum cardinality gives $\vi(G)\leq n-M$.

Combining the two inequalities gives $\vi(G)=n-M$.
\hfill $\square$
\end{proof}


A set $S\subseteq V(G)$ is irredundant if every vertex of $S$ has
neighbours in at least two distinct components of $G \setminus S$. We use the
following result of Gima et al.~\cite{gima2025structural}

\begin{lemma}[\cite{gima2025structural}]\label{lem:irredundant}
Every graph has an integrity separator that is irredundant.
\end{lemma}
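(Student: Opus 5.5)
We prove Lemma~\ref{lem:irredundant} by a potential-function argument: take an integrity separator that is extremal in a suitable sense, and show that any redundant vertex could be released back into the graph without increasing the objective, contradicting extremality.

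\textbf{Choice of separator.} Among all integrity separators of $G$, fix one, $S$, of \emph{minimum cardinality}. Suppose some $v\in S$ does not have neighbours in two distinct components of $G\setminus S$. Then the neighbours of $v$ outside $S$ lie in at most one component $C_v$ of $G\setminus S$, or there are none.

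\textbf{Releasing $v$.} Set $S'=S\setminus\{v\}$. In $G\setminus S'$, the vertex $v$ either forms a new singleton component or joins $C_v$, giving a component on $|V(C_v)|+1$ vertices. All other components are unchanged. Hence
\[
\mu(V\setminus S')\le \mu(V\setminus S)+1 .
\]
This is exactly the argument already used in the proof of Lemma~\ref{lem:equivalence}. Since $|S'|=|S|-1$, we get
\[
|S'|+\mu(V\setminus S')\le |S|+\mu(V\setminus S)=\vi(G).
\]
So $S'$ is also an integrity separator, and it is strictly smaller than $S$. This contradicts the minimality of $S$. Therefore every vertex of $S$ has neighbours in at least two components of $G\setminus S$, that is, $S$ is irredundant.

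\textbf{Main obstacle.} The only point needing care is the case analysis on the neighbourhood of $v$. If $v$ has no neighbours outside $S$, it becomes a singleton component; this raises $\mu$ only when $\mu(V\setminus S)=0$, i.e.\ $S=V$, and even then $\mu$ rises by exactly one. If all its outside neighbours lie in one component, merging increases that component by exactly one. In both cases the bound $\mu(V\setminus S')\le\mu(V\setminus S)+1$ holds, so the inequality goes through uniformly and no further obstacle is expected. The degenerate case $S=\varnothing$ is vacuously irredundant.
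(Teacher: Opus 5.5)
Your proof is correct: taking a minimum-cardinality integrity separator and releasing any vertex whose outside neighbours lie in at most one component is the standard exchange argument. The paper gives no proof of its own and cites Gima et al.\ instead, but your argument is the same release-a-vertex step the paper uses in the proof of Lemma~\ref{lem:equivalence}.
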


Next, we prove some properties of such an irredundant integrity separator.

\begin{lemma}\label{lem:balanced}
Let $G$ be a non-empty graph and $S$ be an irredundant integrity separator, let $D_1,\ldots,D_m$ be the connected
components of $G \setminus S$, and let $h=\max_i|V(D_i)|$. The set of components can be
partitioned into two collections whose vertex unions $A$ and $B$ satisfy $\bigl||A|-|B|\bigr|\leq h$ and $|N(A)\cap N(B)|\geq |S|/2$.

\end{lemma}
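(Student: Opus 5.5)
The plan is to build the partition with a random construction that keeps $\bigl||A|-|B|\bigr|\le h$ for \emph{every} outcome, and then use an expectation argument to find an outcome where at least $|S|/2$ vertices of $S$ see both sides. If $S=\varnothing$, the required bound $|N(A)\cap N(B)|\ge 0$ is trivial, so only the balance condition needs attention, and the construction below provides it.

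First, order the components so that $|V(D_1)|\ge |V(D_2)|\ge\cdots\ge|V(D_m)|$; thus $|V(D_1)|=h$. Group them into consecutive pairs $(D_1,D_2),(D_3,D_4),\ldots$, and if $m$ is odd, leave $D_m$ as a singleton group. For each pair, flip an independent fair coin to decide which member goes to $A$ and which goes to $B$. For a singleton group, flip a fair coin to choose its side.

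Balance holds deterministically. Write $d_j=|V(D_{2j-1})|-|V(D_{2j})|\ge 0$ for each pair, and $d=|V(D_m)|$ for the singleton if present. Then $|A|-|B|$ is a signed sum $\sum\pm d_j$ (plus $\pm d$), so $\bigl||A|-|B|\bigr|$ is at most the alternating sum
\[
|V(D_1)|-|V(D_2)|+|V(D_3)|-\cdots .
\]
Because the sizes are nonincreasing, every partial sum of the form $-|V(D_{2i})|+|V(D_{2i+1})|$ is at most $0$. Grouping the terms this way shows the alternating sum is at most $|V(D_1)|=h$.

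For the second condition, since $S$ is irredundant, every $s\in S$ has neighbours in two distinct components $D_i\ne D_j$. Note that $N(A)$ and $N(B)$ lie in $S$, because there are no edges between components. So $s\in N(A)\cap N(B)$ as soon as $D_i$ and $D_j$ land on different sides.
\begin{itemize}
\item If $D_i$ and $D_j$ form one pair, they always land on different sides, so this happens with probability $1$.
\item Otherwise, their sides are decided by independent fair coins, so this happens with probability exactly $1/2$.
\end{itemize}
By linearity of expectation, $\mathbb{E}\,|N(A)\cap N(B)|\ge |S|/2$. Hence some outcome of the coins achieves $|N(A)\cap N(B)|\ge|S|/2$, and that outcome also satisfies the balance bound.

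The main obstacle is reconciling the two requirements, since a naive uniformly random partition gives the neighbourhood bound but not the balance. The pairing of size-sorted components handles both at once: it makes balance hold for all outcomes, while keeping each vertex's probability of seeing both sides at least $1/2$. If a deterministic statement is preferred, the same argument can be derandomized by conditional expectations over the pair coins.
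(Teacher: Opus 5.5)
Your proof is correct and follows essentially the same route as the paper: sort the components by size, pair consecutive ones (your fair coin for a leftover singleton is equivalent to the paper's empty dummy component), bound $\bigl||A|-|B|\bigr|$ by the alternating sum $\le h$, and orient the pairs independently at random so that each $s\in S$ sees both sides with probability at least $1/2$. The only small difference is that you fix two components adjacent to $s$ and get probability exactly $1/2$ directly, whereas the paper counts all $r$ pairs containing components adjacent to $s$ and bounds the failure probability by $2^{1-r}\le 1/2$.
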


\begin{proof}
Relabel the components so as to have $|D_1|\geq |D_2|\geq \cdots \geq |D_m|$. If $m$ is
odd, add one empty dummy component, and pair consecutive components. For
each pair, place one component in $A$ and the other in $B$. Put
$\delta_i=|D_{2i-1}|-|D_{2i}|\geq 0$. For every orientation of the pairs,
$\bigl||A|-|B|\bigr|\leq \sum_i\delta_i$. Moreover,
$\sum_i\delta_i=|D_1|-|D_2|+|D_3|-|D_4|+|D_5|-\cdots = |D_1|-(|D_2|-|D_3|+|D_4|-|D_5|+ \cdots)  \leq |D_1|\leq h$. Thus $\bigl||A|-|B|\bigr|\leq h$.

Next, we orient the pairs independently and uniformly at random. Consider a vertex $s\in S$.
Since $S$ is irredundant, $s$ has neighbours in at least two distinct
components of $G \setminus S$. If one pair contains two components adjacent to $s$,
then $s$ has a neighbour on each side for every orientation. Otherwise, if
$r\geq 2$ distinct pairs contain components adjacent to $s$, the probability
that all such components are put on the same side is
$2(1/2)^r=2^{1-r}\leq 1/2$. Hence,
$\Pr[s\in N(A)\cap N(B)]\geq 1/2$. By linearity of expectation, some
orientation satisfies $|N(A)\cap N(B)|\geq |S|/2$.  \hfill $\square$
\end{proof}

The random orientations are used only to prove the existence of such a partition. Algorithm \ref{alg:vi} below is deterministic: it enumerates all relevant choices of $A$, so it need not construct the partition from the unknown separator.

\begin{lemma}\label{lem:threecases}
Let $G$ be a non-empty graph and $S$ be an irredundant integrity separator. Put $k=|S|$, let $h$ be
the order of a largest component of $G \setminus S$, and let $d=n-k-h$.
For an integer $t$, if $k>t$ and $d>t$ then the set $V\setminus S$ can be partitioned into two sets $A$ and $B$ such that they are anticomplete and $\max\{|A|,|B|\}<n-3t/2$ and
$|V\setminus N[A]|<\frac{n}{2}-\frac{t}{4}$.
\end{lemma}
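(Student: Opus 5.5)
The plan is to take the partition from Lemma~\ref{lem:balanced} and then decide which of its two sides to call $A$. Both required bounds should then follow from elementary counting, using $n=k+h+d$ together with $k>t$ and $d>t$.

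\textbf{Step 1 (the partition and the size bound).} Apply Lemma~\ref{lem:balanced} to $S$. This gives a partition of the components of $G\setminus S$ into two collections with vertex unions $P$ and $Q$. They satisfy $|P|+|Q|=n-k$, $\bigl||P|-|Q|\bigr|\le h$ and $|N(P)\cap N(Q)|\ge k/2$. The sets $P$ and $Q$ are unions of distinct components of $G\setminus S$, so they are anticomplete. Moreover,
\[
\max\{|P|,|Q|\}\le \frac{n-k+h}{2}.
\]
To get $\frac{n-k+h}{2}<n-\frac{3t}{2}$, it suffices that $h-k<n-3t$. Substituting $n=k+h+d$, this reads $3t<2k+d$, which holds because $k>t$ and $d>t$. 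This bound does not depend on which side is named $A$, so it holds for either labelling.

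\textbf{Step 2 (the neighbourhood bound, which I expect to be the main obstacle).} Since $N(P)\subseteq S$, we have $V\setminus N[P]=Q\cup(S\setminus N(P))$, so
\[
|V\setminus N[P]|=|Q|+k-|N(P)|,
\]
and symmetrically for $Q$. A one-sided estimate is not enough. Even taking $Q$ to be the smaller side gives only $|Q|\le (n-k)/2$ and $k-|N(P)|\le k/2$, for a total of $n/2$; this misses the required extra $t/4$. I would therefore sum the two symmetric quantities:
\[
|V\setminus N[P]|+|V\setminus N[Q]| = (n-k)+2k-\bigl(|N(P)|+|N(Q)|\bigr).
\]

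\textbf{Step 3 (bounding the sum).} Irredundancy of $S$ means every vertex of $S$ has a neighbour in some component of $G\setminus S$, so $N(P)\cup N(Q)=S$. Hence
\[
|N(P)|+|N(Q)|=k+|N(P)\cap N(Q)|\ge \frac{3k}{2}.
\]
Substituting, the sum in Step 2 is at most $n-k/2<n-t/2$, using $k>t$.

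\textbf{Step 4 (choosing $A$).} Let $A$ be whichever of $P,Q$ attains the smaller value of $|V\setminus N[\cdot]|$, and let $B$ be the other. By Step 3,
\[
|V\setminus N[A]|<\frac{n}{2}-\frac{t}{4}.
\]
Step 1 already gives $\max\{|A|,|B|\}<n-3t/2$ for this labelling. Finally, $A$ and $B$ are anticomplete and partition $V\setminus S$, which completes the argument.
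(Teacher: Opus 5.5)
Your proof is correct and follows the paper's argument. Step 1 averages $|A|+|B|=n-k$ with $\bigl||A|-|B|\bigr|\le h$, exactly as the paper does. Your bound $|V\setminus N[P]|+|V\setminus N[Q]|\le n-k/2$ is the complement form of the paper's identity $|N[A]|+|N[B]|=n+|N(A)\cap N(B)|\ge n+k/2$, and both arguments then name the better side $A$.
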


\begin{proof}
Since $k,d>t$ we have $h=n-k-d<n-2t$. By Lemma~\ref{lem:balanced}, the components of $G \setminus S$ can be partitioned into anticomplete sets $A,B$ with $|A|+|B|=n-k<n-t$ and
$\bigl||A|-|B|\bigr|\leq h<n-2t$, where $\max\{|A|,|B|\}<\bigl((n-t)+(n-2t)\bigr)/2=n-3t/2$.

Every vertex of the irredundant separator has a neighbour in at least two
components of $G \setminus S$, hence in $A\cup B$, so $N[A]\cup N[B]=V$. As $A$ and
$B$ are anticomplete, $N[A]\cap N[B]=N(A)\cap N(B)$ has order at least
$k/2>t/2$ by Lemma~\ref{lem:balanced}, so
$|N[A]|+|N[B]|=n+|N[A]\cap N[B]|>n+t/2$. One of the two closed
neighbourhoods therefore has order greater than $(n+t/2)/2$; considering that
side as $A$ gives $|V\setminus N[A]|<n-(n+t/2)/2=n/2-t/4$.
\qed
\end{proof}

The following is a direct corollary of the above-stated lemma where we ascertain the value of $t$ satisfying the constraints mentioned in Lemma \ref{lem:balanced}.

\begin{corollary}\label{cor:threshold}
Let $S$ be an irredundant integrity separator, and let $t$ be an integer
such that $t\geq\lceil2n/5\rceil$. Then at least one of the following holds:
\begin{enumerate}
    \item[(1)] $S$ has at most $t$ vertices;
    
    \item[(2)] if $C$ is a largest component of $G \setminus S$, then the union of
    all components of $G \setminus S$ other than $C$ has at most $t$ vertices;
    
    \item[(3)] $V\setminus S$ can be partitioned into two anticomplete sets
    $A$ and $B$ such that $|A|,|B|\leq t$, and for one of the two sets,
    say $A$, we have $|V\setminus N[A]|\leq t$.
\end{enumerate}
\end{corollary}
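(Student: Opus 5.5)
We argue by contraposition: suppose (1) and (2) both fail, i.e.\ $k=|S|>t$ and, with $C$ a largest component of $G\setminus S$ and $h=|V(C)|$, the union of the other components has $d=n-k-h>t$ vertices. These are exactly the hypotheses of Lemma~\ref{lem:threecases}. That lemma gives a partition of $V\setminus S$ into anticomplete sets $A,B$ with
\begin{equation*}
\max\{|A|,|B|\}<n-\tfrac{3t}{2}
\qquad\text{and}\qquad
|V\setminus N[A]|<\tfrac{n}{2}-\tfrac{t}{4},
\end{equation*}
where $A$ is the side whose closed neighbourhood is the larger one. It therefore only remains to show that both strict bounds are at most $t$ once $t\ge\lceil 2n/5\rceil$, so that (3) holds.

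For the first bound we need $n-3t/2\le t$, which is $t\ge 2n/5$. The strict inequality then gives $|A|,|B|<t$, and in particular $|A|,|B|\le t$.

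For the second bound we need $n/2-t/4\le t$, which is $t\ge 2n/5$ again. This gives $|V\setminus N[A]|<t$.

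Both inequalities are implied by $t\ge\lceil 2n/5\rceil\ge 2n/5$. Since both bounds in Lemma~\ref{lem:threecases} are strict and all quantities are integers, there is no rounding issue.

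The one point needing care is that Lemma~\ref{lem:threecases} is stated with $d=n-k-h$, so its $d$ must be identified with the size of the union in (2). Removing the largest component $C$ from $V\setminus S$ leaves exactly $n-k-h$ vertices, so the two agree; choosing a different largest component does not change $h$. We also need $G$ non-empty for Lemma~\ref{lem:threecases}. If $n=0$ then $t\ge 0=|S|$ and (1) holds trivially, so this case causes no trouble.
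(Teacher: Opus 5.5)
Your proof is correct and is the paper's argument: assuming (1) and (2) fail puts you under the hypotheses of Lemma~\ref{lem:threecases}, and $t\ge 2n/5$ gives both $n-3t/2\le t$ and $n/2-t/4\le t$. Your added remarks on the case $n=0$ and on identifying $d$ with the size of the union in (2) are harmless details that the paper leaves implicit.
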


\begin{proof}
Put $k=|S|$, let $h$ be the order of a largest component of $G \setminus S$, where $G$ is a non-empty graph
and put $d=n-k-h$.
Suppose that (1) and (2) do not hold. Then $k>t$ and $d>t$, so
Lemma~\ref{lem:threecases} applies. Since $t\geq 2n/5$,
\[
n-\frac{3t}{2}\leq t
\qquad\text{and}\qquad
\frac{n}{2}-\frac{t}{4}\leq t.
\]
Hence the sets $A$ and $B$ given by Lemma~\ref{lem:threecases} satisfy
$|A|,|B|\leq t$ and, for one of them, say $A$,
$|V\setminus N[A]|\leq t$. Thus (3) holds.
\hfill $\square$
\end{proof}

Let $S$ be an irredundant integrity separator. If $|S|\leq t$, then
enumerating all sets $S'\subseteq V$ with $|S'|\leq t$ and evaluating
$|S'|+\mu(V\setminus S')$ finds $\vi(G)$, since $S$ itself is among the
enumerated sets.

Suppose instead that the union of all components of $G \setminus S$ other than one
largest component has at most $t$ vertices. Let $U$ denote this union.
Then $|U|\leq t$, and Lemma~\ref{lem:deficit} gives
$\mu(V\setminus N[U])\geq\mu(U)$ and $\vi(G)=n-|U|$. Hence, by enumerating
all sets $U'\subseteq V$ with $|U'|\leq t$ satisfying
$\mu(V\setminus N[U'])\geq\mu(U')$ and taking the minimum value of
$n-|U'|$, we again obtain $\vi(G)$.

If neither of these two cases occur, the third alternative of
Corollary~\ref{cor:threshold} applies. The next lemma handles this remaining
case.

To handle the third alternative of Corollary~\ref{cor:threshold}, we enumerate sets $A$ with $|A|\leq t$ and $|R_A|\leq t$, where $R_A=V\setminus N[A]$. For each such $A$, every set $B\subseteq R_A$ is anticomplete to $A$, so deleting $V\setminus(A\cup B)$ gives objective value $n-|A|-|B|+\max\{\mu(A),\mu(B)\}$. Writing $q=\mu(A)$, finding the best choice of $B$ therefore amounts to maximizing $|B|-\max\{q,\mu(B)\}$ over all subsets of $R_A$.

Let $\mathcal{F}_t={X\subseteq V:|X|\leq t}$. For each $q\in{0,\ldots,t}$ and $R\in\mathcal{F}t$, define
\[
F_q(R)=\max_{B\subseteq R}
\bigl(|B|-\max\{q,\mu(B)\}\bigr).
\]
For a set $A\in\mathcal{F}_t$ such that
$R_A=V\setminus N[A]\in\mathcal{F}_t$, let $q=\mu(A)$. The minimum value obtained by choosing a set $B\subseteq R_A$ is $n-|A|-F_q(R_A)$.

If $A$ is the set satisfying the third alternative of
Corollary~\ref{cor:threshold} for an irredundant integrity separator,
then this value equals $\vi(G)$. The following lemma gives a recurrence
for computing $F_q(R)$.

\begin{lemma}\label{lem:dp}
For every $q\in\{0,\ldots,t\}$, we have $F_q(\emptyset)=-q$, and every
nonempty $R\in\mathcal{F}_t$ satisfies
\begin{equation}
F_q(R)=
\max\left\{ \bigl(|R|-\max\{q,\mu(R)\} \bigr), \bigl (\max_{v\in R}F_q(R\setminus\{v\} \bigr ) \right\}.
\label{eq:subset-recurrence}
\end{equation}
Every predecessor $R\setminus\{v\}$ belongs to $\mathcal F_t$ and has smaller cardinality. Consequently, for each fixed $q$, all values $F_q(R)$ can be computed in increasing order of $|R|$, using polynomial work per state.

Moreover, let $A\in\mathcal{F}_t$, $q=\mu(A)$, and
$R_A=V\setminus N[A]\in\mathcal{F}_t$. Then
\begin{equation}
\min_{B\subseteq R_A}
\left(
|V\setminus(A\cup B)|+\mu(A\cup B)
\right)
=n-|A|-F_q(R_A).
\label{eq:separator-table}
\end{equation}
\end{lemma}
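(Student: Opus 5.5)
The base case is immediate: the only subset of $\emptyset$ is $B=\emptyset$, and $\mu(\emptyset)=0$. Hence $F_q(\emptyset)=0-\max\{q,0\}=-q$.

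For \eqref{eq:subset-recurrence} with $R\neq\emptyset$, I would split the family of subsets $B\subseteq R$ into $B=R$ and $B\subsetneq R$. The choice $B=R$ contributes exactly the first term. Every proper subset $B\subsetneq R$ misses some $v\in R$, so it is a subset of $R\setminus\{v\}$. Conversely, every subset of some $R\setminus\{v\}$ is a subset of $R$. Therefore
\[
\max_{B\subsetneq R}\bigl(|B|-\max\{q,\mu(B)\}\bigr)=\max_{v\in R}F_q(R\setminus\{v\}),
\]
and taking the maximum of the two parts gives the recurrence.

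Each $R\setminus\{v\}$ has cardinality $|R|-1\le t$, so it lies in $\mathcal F_t$. Computing $\mu(R)$ takes polynomial time by a connected-components search in $G[R]$. So each state needs polynomial work given the smaller states, and we can process the sets in $\mathcal F_t$ in order of increasing cardinality for every fixed $q$.

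For \eqref{eq:separator-table}, fix $A$ and $B\subseteq R_A=V\setminus N[A]$. I would first check two facts.
\begin{itemize}
\item $A\cap B=\emptyset$, since $B$ avoids $N[A]\supseteq A$. Hence $|V\setminus(A\cup B)|=n-|A|-|B|$.
\item No edge joins $A$ and $B$, since $B\cap N(A)=\emptyset$. So every component of $G[A\cup B]$ lies entirely in $A$ or entirely in $B$, and these components are exactly the components of $G[A]$ and of $G[B]$. This gives $\mu(A\cup B)=\max\{\mu(A),\mu(B)\}=\max\{q,\mu(B)\}$.
\end{itemize}
Substituting, the objective for a given $B$ equals $n-|A|-\bigl(|B|-\max\{q,\mu(B)\}\bigr)$. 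Minimizing over $B\subseteq R_A$ is the same as maximizing the bracket, which by definition gives $n-|A|-F_q(R_A)$.

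We also need $q\le t$ so that $F_q$ is among the tabulated values. This holds because $q=\mu(A)\le |A|\le t$.

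There is no real obstacle here. The only points needing care are the anticomplete decomposition of the components, which justifies $\mu(A\cup B)=\max\{\mu(A),\mu(B)\}$, and the case $B=\emptyset$. The latter is consistent: it gives $\mu(A\cup B)=q$, matching the convention $\mu(\emptyset)=0$.
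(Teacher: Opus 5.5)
Your proof is correct and follows the paper's argument essentially step for step. You split the subsets of $R$ into $B=R$ and proper subsets, each contained in some $R\setminus\{v\}$, and you use that every $B\subseteq R_A$ is disjoint from and anticomplete to $A$, so that $\mu(A\cup B)=\max\{q,\mu(B)\}$. Your extra remark that $q=\mu(A)\le|A|\le t$, which guarantees $F_q$ is among the tabulated functions, makes explicit a detail the paper leaves implicit.
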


\begin{proof}
The equality $F_q(\emptyset)=-q$ follows directly from the definition. Now let
$R\neq\emptyset$, and let $B\subseteq R$ attain $F_q(R)$. If $B=R$, its value
is the first term of~\eqref{eq:subset-recurrence}. Otherwise, choose
$v\in R\setminus B$. Since $B\subseteq R\setminus\{v\}$, its value is at most
$F_q(R\setminus\{v\})$. Conversely, the first term is realized by $B=R$, and
every other term is realized by a subset of $R\setminus\{v\}$. This proves the
recurrence. Since every predecessor has one fewer vertex, the stated evaluation
order follows.

For the second assertion, every $B\subseteq R_A$ is disjoint from and
anticomplete to $A$. Hence $\mu(A\cup B)=\max\{q,\mu(B)\}$, and therefore $|V\setminus(A\cup B)|+\mu(A\cup B) =n-|A|-\bigl(|B|-\max\{q,\mu(B)\}\bigr).$ Minimizing over $B\subseteq R_A$ and applying the definition of $F_q(R_A)$
proves~\eqref{eq:separator-table}. \qed
\end{proof}

Algorithm~\ref{alg:vi} combines the three cases above. It computes
$\vi(G)$, and an integrity separator can be recovered by retaining the
corresponding choices. For every set $X\in\mathcal F_t$, we precompute
$\mu(X)$ and $N[X]$. Whenever the algorithm requires $\mu(Y)$ for a set
$Y\notin\mathcal F_t$, such as $V\setminus S$ or
$V\setminus N[U]$, we compute it in polynomial time by finding the
connected components of $G[Y]$.

\begin{algorithm}[H]
\caption{Exact Vertex Integrity}\label{alg:vi}
\begin{algorithmic}[1]
\Require An undirected graph $G=(V,E)$ with $n=|V|$
\Ensure $\vi(G)$
\State $t\gets\lceil2n/5\rceil$ and
       $\mathcal F_t\gets\{X\subseteq V:|X|\leq t\}$
\State Precompute $\mu(X)$ and $N[X]$ for every $X\in\mathcal F_t$
\State Whenever $Y\notin\mathcal F_t$, compute $\mu(Y)$ when needed by finding the connected components of $G[Y]$
\State $\mathit{ans}\gets n$
\ForAll{$S\in\mathcal F_t$}
  \State $\mathit{ans}\gets
  \min\{\mathit{ans},|S|+\mu(V\setminus S)\}$
\EndFor
\ForAll{$U\in\mathcal F_t$}
  \If{$\mu(V\setminus N[U])\ge\mu(U)$}
    \State $\mathit{ans}\gets\min\{\mathit{ans},n-|U|\}$
  \EndIf
\EndFor
\For{$q=0,1,\ldots,t$}
  \State Compute $F_q(R)$ for all $R\in\mathcal F_t$ using
  Eq.~\eqref{eq:subset-recurrence}
  \ForAll{$A\in\mathcal F_t$ with $\mu(A)=q$}
    \State $R\gets V\setminus N[A]$
    \If{$|R|\leq t$}
      \State $\mathit{ans}\gets
      \min\{\mathit{ans},n-|A|-F_q(R)\}$
    \EndIf
  \EndFor
\EndFor
\State \Return $\mathit{ans}$
\end{algorithmic}
\end{algorithm}

We can now complete the proof of Theorem~\ref{thm:algorithm}.

\exactalgorithm*

\begin{proof}
The assertion is immediate when $n=0$. Therefore, assume that $n\geq1$. Every value considered by the algorithm is attained by a vertex-deletion set. This is immediate in the first phase. In the second
phase, it follows from Lemma~\ref{lem:deficit}, and in the third phase,
from Lemma~\ref{lem:dp}. Hence, the algorithm never returns a value smaller
than $\vi(G)$.

For the reverse inequality, let $S$ be an irredundant integrity separator,
which exists by Lemma~\ref{lem:irredundant}. Put $k=|S|$, let $h$ be the
order of a largest component of $G \setminus S$, and let $d=n-k-h$. Thus
$\vi(G)=k+h=n-d$.

If $k\leq t$, the first phase enumerates $S$ and obtains $\vi(G)$.
Suppose $d\leq t$, and let $U$ be the union of all components of $G \setminus S$
except one largest component $C$. Then $|U|=d$, $\mu(U)\leq h$, and
$N[U]\subseteq U\cup S$. Hence $C$ survives in $G-N[U]$, so
$\mu(V\setminus N[U])\geq h\geq\mu(U)$. The second phase therefore
considers $U$ and obtains
$n-|U|=n-d=\vi(G)$.

It remains to consider $k>t$ and $d>t$. By
Corollary~\ref{cor:threshold}, there is an anticomplete partition $A,B$ of
$V\setminus S$ such that $|A|\leq t$ and
$R_A=V\setminus N[A]$ has size at most $t$. Since $B\subseteq R_A$, the
third phase considers $A$ with $q=\mu(A)$. By
Eq.~\eqref{eq:separator-table}, the value recorded for $A$ is at most
\begin{align*}
n-|A|-\bigl(|B|-\max\{\mu(A),\mu(B)\}\bigr)
&=n-|A|-|B|+h\\
&=k+h=\vi(G),
\end{align*}
where $\max\{\mu(A),\mu(B)\}=h$ because $A$ and $B$ form an
anticomplete partition of $V\setminus S$. Thus the algorithm returns
exactly $\vi(G)$.

Let
$M(n,t)=\sum_{i\leq t}\binom{n}{i}=|\mathcal F_t|$.
Each phase uses polynomial work per set in $\mathcal F_t$, and for each
$q$ the recurrence of Lemma~\ref{lem:dp} uses $O(n)$ transitions per
state. Thus the running time is $O^*(M(n,t))$. The tables for different
values of $q$ are computed separately, so the space usage is also
$O^*(M(n,t))$.

For $t=\lceil2n/5\rceil$, the entropy bound gives $M(n,t)=O^*\left(2^{H(2/5)n}\right)$, where $H(x)=-x\log_2x-(1-x)\log_2(1-x)$. Since
$2^{H(2/5)}=5/(2^{2/5}3^{3/5})=1.96013\ldots<1.9602$, the time and space complexity bounds are $O(1.9602^n)$.

An integrity separator can be recovered within the same bounds. Retain the
best set in the first phase. In the second, for the selected $U$, choose a component of $G-N[U]$ having at least $\mu(U)$ vertices and,
within that component, choose a connected set $W$ of order $\mu(U)$ and return
$V\setminus(U\cup W)$; if $\mu(U)=0$, return $V$. In the third, recover an
optimizing set $B$ from the table $F_{\mu(A)}$ and return
$V\setminus(A\cup B)$.

\hfill $\square$
\end{proof}

\section{Conclusion}\label{sec:conclusion}

We gave a deterministic $O(1.9602^n)$-time exact algorithm for computing
the vertex integrity of an $n$-vertex graph, thereby improving over the
straightforward $O^*(2^n)$ enumeration. We also proved that, unless ETH
fails, \textsc{Vertex Integrity} cannot be solved in $2^{o(n)}$ time, using
a reduction with linear vertex blow-up from \textsc{Vertex Cover} on graphs of maximum
degree at most three.

Several questions remain open. Can the base $1.9602$ be improved further?
In particular, stronger bounds in the third case of
Corollary~\ref{cor:threshold}, or a different way of handling the
corresponding subsets, may lead to a faster exact algorithm. It is also
open whether vertex integrity can be computed in $O^*(c^n)$ time using
polynomial space for some constant $c<2$. Finally, narrowing the gap
between the ETH lower bound and the present upper bound remains an
interesting direction for future work.

\bibliographystyle{splncs04}
\bibliography{ref}

@article{BarefootEntringerSwart1987,
  author  = {Barefoot, Curtis A. and Entringer, Roger and Swart, Henda},
  title   = {Vulnerability in Graphs-A Comparative Survey},
  journal = {Journal of Combinatorial Mathematics and Combinatorial Computing},
  volume  = {1},
  pages   = {13--22},
  year    = {1987}
}

@article{bagga1992survey,
  title={A survey of integrity},
  author={Bagga, Kunwarjit S and Beineke, Lowell W and Goddard, Wayne D and Lipman, Marc J and Pippert, Raymond E},
  journal={Discrete Applied Mathematics},
  volume={37},
  pages={13--28},
  year={1992},
  publisher={Elsevier}
}

@article{clark1987computational,
  title={Computational complexity of integrity},
  author={Clark, Lane H and Entringer, Roger C and Fellows, Michael R},
  journal={J. Combin. Math. Combin. Comput},
  volume={2},
  pages={179--191},
  year={1987}
}

@article{fellows1989immersion,
  title={The immersion order, forbidden subgraphs and the complexity of network integrity},
  author={Fellows, Michael R and Stueckle, Sam},
  journal={J. Combin. Math. Combin. Comput},
  volume={6},
  number={1},
  pages={23--32},
  year={1989}
}

@article{drange2016computational,
  title={On the computational complexity of vertex integrity and component order connectivity},
  author={Drange, P{\aa}l Gr{\o}n{\aa}s and Dregi, Markus and van’t Hof, Pim},
  journal={Algorithmica},
  volume={76},
  number={4},
  pages={1181--1202},
  year={2016},
  publisher={Springer}
}

@article{gima2025structural,
  title={Structural parameterizations of vertex integrity},
  author={Gima, Tatsuya and Hanaka, Tesshu and Kobayashi, Yasuaki and Murai, Ryota and Ono, Hirotaka and Otachi, Yota},
  journal={Theoretical Computer Science},
  volume={1024},
  pages={114954},
  year={2025},
  publisher={Elsevier}
}

@article{casel2026combining,
  title={Combining crown structures for vulnerability measures},
  author={Casel, Katrin and Friedrich, Tobias and Niklanovits, Aikaterini and Simonov, Kirill and Zeif, Ziena},
  journal={Algorithmica},
  volume={88},
  number={1},
  pages={9},
  year={2026},
  publisher={Springer}
}

@inproceedings{lee2017partitioning,
  title={Partitioning a graph into small pieces with applications to path transversal},
  author={Lee, Euiwoong},
  booktitle={Proceedings of the Twenty-Eighth Annual ACM-SIAM Symposium on Discrete Algorithms},
  pages={1546--1558},
  year={2017},
  organization={SIAM}
}

@article{impagliazzo2001complexity,
  title={On the complexity of k-SAT},
  author={Impagliazzo, Russell and Paturi, Ramamohan},
  journal={Journal of Computer and System Sciences},
  volume={62},
  number={2},
  pages={367--375},
  year={2001},
  publisher={Elsevier}
}

@article{bacso2019subexponential,
  title={Subexponential-Time Algorithms for Maximum Independent Set in $P_t$-Free and Broom-Free Graphs: G. Bacs{\'o} et al.},
  author={Bacs{\'o}, G{\'a}bor and Lokshtanov, Daniel and Marx, D{\'a}niel and Pilipczuk, Marcin and Tuza, Zsolt and Van Leeuwen, Erik Jan},
  journal={Algorithmica},
  volume={81},
  number={2},
  pages={421--438},
  year={2019},
  publisher={Springer}
}

@article{goddard1990integrity,
  title={Integrity in graphs: bounds and basics},
  author={Goddard, Wayne and Swart, Henda C},
  journal={J. Combin. Math. Combin. Comput},
  volume={7},
  pages={139--151},
  year={1990}
}

@article{gima2022exploring,
  title={Exploring the gap between treedepth and vertex cover through vertex integrity},
  author={Gima, Tatsuya and Hanaka, Tesshu and Kiyomi, Masashi and Kobayashi, Yasuaki and Otachi, Yota},
  journal={Theoretical Computer Science},
  volume={918},
  pages={60--76},
  year={2022},
  publisher={Elsevier}
}

@InProceedings{hanaka2024parameterized,
  author    = {Tesshu Hanaka and Michael Lampis and Manolis Vasilakis and Kanae Yoshiwatari},
  title     = {Parameterized Vertex Integrity Revisited},
  booktitle = {49th International Symposium on Mathematical Foundations of Computer Science (MFCS 2024)},
  volume    = {306},
  pages     = {58:1--58:14},
  year      = {2024}
}

\end{document}